\def\authorNote{} 
\newcommand{\ugets}{\xleftarrow{\$}}
\newcommand{\U}{\mathfrak{U}}
\newcommand{\minus}{\scalebox{0.7}{-}}
\newcommand{\fhat}[2]{\ifthenelse{\equal{#2}{}}{\hat{f}[#1]}{\ifthenelse{\equal{#2}{0}}{\hat{f}[\emptyset]}{\hat{f}[#1_{\leq #2}]}}}
\newcommand{\gain}[2]{\ifthenelse{\equal{#2}{}}{g[#1]}{g[#1_{\leq #2}]}}
\newcommand{\pr}[2][]{\Pr_{\ifthenelse{\isempty{#1}}{}{{#1}}}\left[{#2}\right]}
\newcommand{\e}{\mathrm{e}}
\newcommand{\remove}[1]{}
\newcommand{\etal}{et~al.\ }
\newcommand{\eg}{e.g.,\ }
\newcommand{\wrt} {with respect to\ }
\newcommand{\floor}[1]{\lfloor #1 \rfloor}
\newcommand{\bra}[1]{\langle#1\rvert}
\newcommand{\ket}[1]{\lvert#1\rangle}
\newcommand{\set}[1]{\{ #1 \}}
\newcommand{\C}{\mathbb{C}}
\newcommand{\N}{{\mathbb N}}
\newcommand{\cA}{{\mathcal A}}
\newcommand{\bfy}{\mathbf{y}}
\newcommand{\sfA}{\mathsf{A}}
\newcommand{\sfC}{\mathsf{C}}
\newcommand{\sfO}{\mathsf{O}}
\newcommand{\sfW}{\mathsf{W}}
\newcommand{\veps}{\varepsilon}
\newcommand{\poly}{\operatorname{poly}}
\newtheorem{observation}[theorem]{Observation}
\newtheorem{fact}[theorem]{Fact}
\newtheorem{construction}[theorem]{Construction}
\newtheorem{algorithm1}[theorem]{Algorithm}
\newtheorem{assumption}[theorem]{Assumption}
\renewcommand{\@Opargbegintheorem}[4]{%
  #4\trivlist\item[\hskip\labelsep{#3#2\@thmcounterend}]}
\newtheorem{theorem}{Theorem}[section]
\theoremstyle{plain}
\newtheorem{lemma}[theorem]{Lemma}
\newtheorem{corollary}[theorem]{Corollary}
\newtheorem{fact}[theorem]{Fact}
\theoremstyle{definition}
\newtheorem{definition}[theorem]{Definition}
\theoremstyle{definition}
\newtheorem{remark}[theorem]{Remark}
\newcommand{\sdotfill}{\textcolor[rgb]{0.8,0.8,0.8}{\dotfill}} 
\def\th@protocol{%
\normalfont 
\setbeamercolor{block title example}{bg=orange,fg=white}
\setbeamercolor{block body example}{bg=orange!20,fg=black}
\def\inserttheoremblockenv{exampleblock}
}
\theoremstyle{protocol}
\newtheorem{proto}[theorem]{Protocol}
\newtheorem{protoc}[theorem]{Protocol}
\newcommand{\namedref}[2]{#1~\ref{#2}}
\newcommand{\torestate}[3]{%
\expandafter \def \csname BBRESTATE #2 \endcsname{#3}
\theoremstyle{plain}
\newtheorem{BBRESTATETHMNUM#2}[theorem]{#1}
\begin{BBRESTATETHMNUM#2}\label{#2}\csname BBRESTATE #2 \endcsname   \end{BBRESTATETHMNUM#2}
\newtheorem*{BBRESTATETHMNONNUM#2}{\namedref{#1}{#2}}
}
\newcommand{\restate}[1]{\begin{BBRESTATETHMNONNUM#1}[Restated] \csname BBRESTATE #1 \endcsname
\end{BBRESTATETHMNONNUM#1}}
\renewcommand\qedsymbol{$\diamond$}
\newcommand{\YT}[1]{{\color{brown} [{YT:} #1]}}
\newcommand{\YT}[1]{}
\newcommand{\email}[1]{\href{mailto:#1}{#1}}
\title{A Note on Quantum Phase Estimation}
\author{
Yao-Ting Lin \thanks{UCSB, \email{yao-ting\_lin@ucsb.edu}. Part of the work was done when working at Academia Sinica.}
}
\date{}
\begin{document}

\maketitle

\begin{abstract}
    In this work, we study the phase estimation problem. 
    We show an alternative, simpler and self-contained proof of query lower bounds.
    Technically, compared to the previous proofs~\cite{nayak1999quantum, bessen2005lower}, our proof is considerably elementary. Specifically, our proof consists of basic linear algebra without using the knowledge of Boolean function analysis and adversary methods. 
    Qualitatively, our bound is tight in the \emph{low success probability} regime and offers a more fine-grained trade-off. In particular, we prove that for any $\veps>0,p \geq 0$, every algorithm requires at least $\Omega(p/\veps)$ queries to obtain an $\veps$-approximation for the phase with probability at least $p$. However, the existing bounds hold only when $p > 1/2$.
    Quantitatively, our bound is tight since it matches the well-known phase estimation algorithm of Cleve, Ekert, Macchiavello, and Mosca~\cite{cleve1998quantum} which requires $O(1/\veps)$ queries to obtain an $\veps$-approximation with a constant probability. Following the derivation of the lower bound in our framework, we give a new and intuitive interpretation of the phase estimation algorithm of~\cite{cleve1998quantum}, which might be of independent interest.
\end{abstract}
\section{Introduction}
\paragraph{Background.}
First proposed by Kitaev~\cite{kitaev1995quantum}, phase estimation is one of the most fundamental and widely-used subroutines in various quantum algorithms, \eg  Shor's algorithm~\cite{shor1999polynomial}, HHL algorithm~\cite{harrow2009quantum}, Hamiltonian simulation~\cite{berry2009black, childs2010relationship}, quantum approximate counting~\cite{brassard2002quantum}, quantum random walk~\cite{magniez2011search}, the quantization of Markov chains~\cite{szegedy2004quantum} and many others.

In the phase estimation problem, the algorithm is given oracle access to an unknown controlled-unitary $c\minus U$ and a copy of the corresponding eigenstate $\ket{u}$ such that $U\ket{u}=\e^{2\pi i\theta}\ket{u}$, where $\theta \in [0,1)$. The goal of the algorithm is to output an estimation of $\theta$.
The well-known, ``textbook version''~\cite{nielsen2010quantum} phase estimation algorithm of Cleve \etal~\cite{cleve1998quantum} that is based on (inverse) quantum Fourier transform can approximate the phase $\theta$ within an additive error $\veps$ with probability $\Omega(1)$ by making $O(1/\veps)$ queries.

It turns out that the algorithm is optimal in terms of \emph{query complexity}. The lower bound is obtained by reducing the quantum counting problem to the amplitude estimation problem and then reducing the amplitude estimation problem to the phase estimation problem.
The query lower bound of quantum counting was first proved by Nayak and Wu~\cite{nayak1999quantum} using the polynomial method \cite{beals2001quantum}. The work of Bessen~\cite{bessen2005lower} used a different yet arguably complicated technique to obtain the same result. From the above-mentioned works, it was known that to achieve an $\veps$-approximation of $\theta$ with any constant probability greater than $1/2$ requires at least $\Omega(1/\veps)$ queries.
\paragraph{Our Contribution.} 
In this work, we have the following main result:
\begin{theorem}[Corollary~\ref{cor:PE}, restated]
\label{thm:main:restated:1}
For any $\veps\in[1,0)$ and $p\in[1,0]$, any oracle-aided quantum algorithm that has access to an unknown controlled-unitary oracle and has the corresponding eigenstate requires at least $\Omega(p/\veps)$ queries to output $\Tilde{\theta}$ such that $\Pr[|\Tilde{\theta}-\theta|\le \veps]\ge p$.
\end{theorem}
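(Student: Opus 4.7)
The plan is to exploit the fact that after $T$ queries the outcome probabilities of the algorithm, viewed as functions of the unknown phase $\theta$, are non-negative trigonometric polynomials of degree at most $T$. Such polynomials cannot be too concentrated on narrow intervals, so averaging the $\veps$-success probability over $\theta$ will force $T = \Omega(p/\veps)$.

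I would first fix the canonical hard instance $U_\theta = \mathrm{diag}(1, e^{2\pi i\theta})$ with eigenstate $\ket{u} = \ket{1}$; a query lower bound on this single family immediately passes to the general problem. Writing $z := e^{2\pi i\theta}$, the oracle $cU_\theta$ equals $I + (z-1)P$, where $P$ projects onto the fixed subspace where both the control and target qubit are $\ket{1}$. Hence $cU_\theta$ is linear in $z$, and composing $T$ queries with $(T+1)$ fixed $\theta$-independent unitaries produces a state $\ket{\psi_T} = \sum_{k=0}^{T} \ket{\phi_k}\,z^{k}$ whose amplitude $\alpha_y(\theta) = \sum_{k=0}^{T} c_{y,k} z^k$ for each outcome $y$ is a polynomial in $z$ of degree $\le T$. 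Consequently $g_y(\theta) := |\alpha_y(\theta)|^{2}$ is a non-negative trigonometric polynomial in $\theta$ of degree $\le T$, with $\sum_y g_y \equiv 1$.

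The second step is a max-vs-integral inequality: for every non-negative trigonometric polynomial $g$ of degree $\le T$, $\max_\theta g(\theta) \le (T+1)\int_0^{1} g(\theta)\,d\theta$. Writing $g = |\alpha|^{2}$ with $\alpha(\theta) = \sum_{k=0}^{T} c_k z^k$, this follows from $|\alpha(\theta)| \le \sum_k |c_k| \le \sqrt{T+1}\,\|\mathbf{c}\|_2$ (Cauchy--Schwarz) together with Parseval's identity $\|\mathbf{c}\|_{2}^{2} = \int_0^{1} g$. Applying the inequality to each $g_y$ and integrating the per-instance success probability $A(\theta) := \sum_{y:|y-\theta|\le\veps} g_y(\theta) \ge p$ over $\theta\in[0,1)$ yields
\[
p \;\le\; \int_0^{1} A(\theta)\,d\theta \;=\; \sum_y \int_{y-\veps}^{y+\veps} g_y(\theta)\,d\theta \;\le\; 2\veps\,(T+1)\sum_y \int_0^{1} g_y \;=\; 2\veps(T+1),
\]
which rearranges to $T \ge p/(2\veps) - 1 = \Omega(p/\veps)$ whenever the statement is non-vacuous.

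The main subtle point---more a sanity check than a genuine obstacle---is the polynomial-form claim: I must verify it persists when the algorithm uses arbitrary fixed unitaries on all registers, including the register holding $\ket{u}$. This is immediate from the $I + (z-1)P$ representation of $cU_\theta$ together with the fact that the intermediate unitaries are independent of $\theta$: each query contributes at most one factor of $z$ attached to a single fixed projector, so the polynomial degree cannot exceed $T$. No adversary method, Boolean-function analysis, or heavy polynomial-method apparatus is required anywhere in the argument.
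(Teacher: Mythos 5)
Your proof is correct, and it reaches the bound $p\le 2\veps(T+1)$ by a route that is packaged quite differently from the paper's, even though the two arguments share the same analytic core. The paper first discretizes: it reduces $(\veps,p)$-estimation to the $N$-phase distinguishing problem with $N=\floor{1/(2\veps)}$ (Lemmas~\ref{lem:red_1} and~\ref{lem:red_2}), purifies the random choice of $y$ into a register $\mathsf{C}$ initialized to $\ket{\hat{0}}$, and shows (Lemma~\ref{lem:sparse}) that each query advances $\mathsf{C}$ by at most one step in the Fourier basis, so the final joint state is supported on $\ket{\hat{0}},\dots,\ket{\hat{q}}$; Cauchy--Schwarz then caps the success probability at $(q+1)/N$. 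Unfolding the purified register, that sparseness lemma says precisely that the algorithm's state conditioned on the oracle being $U_y$ is a polynomial of degree at most $q$ in $\omega_N^y$ --- the discrete analogue of your observation that the amplitudes are degree-$\le T$ polynomials in $z=e^{2\pi i\theta}$ --- and your max-versus-mean estimate (Cauchy--Schwarz plus Parseval) is the same inequality as the Cauchy--Schwarz step in the proof of Theorem~\ref{thm:main}. What your version buys: it works directly with the continuous parameter, needs no reduction lemmas or purified view, and yields average-case hardness over $\theta$ uniform on $[0,1)$ for free (a continuous analogue of Corollary~\ref{cor:AvPE}). What the paper's version buys: the purified counter register extends cleanly to inverse and power oracles and supplies the conceptual payoff of Section~\ref{sec:interpretation}, the reinterpretation of the Cleve \etal algorithm as preparing an EPR pair with $\mathsf{C}$. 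Two small points to tidy: if the final measurement is not a complete measurement of the whole state, $g_y$ is a sum $\sum_j\lvert\alpha_{y,j}\rvert^2$ of squared degree-$\le T$ amplitudes rather than a single one, but $\max_\theta g_y\le (T+1)\int_0^1 g_y$ survives term by term; and the set $\{\theta:\lvert\tilde{\theta}_y-\theta\rvert\le\veps\}$ should be measured modulo $1$, where it still has length at most $2\veps$.
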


Compared to the previous results, our bound is more fine-grained as it also holds when $p\le1/2$. It lower-bounds the query complexity of obtaining an $\veps$-approximation with \emph{any} probability $p\ge0$. In many settings, \eg cryptography, even a non-negligible\footnote{A function $f(n)$ is negligible if $f(n) = 1/n^{\omega(1)}$.} probability is considered as fatal to the security. 
For example, let $p=1/\poly(n)$, then the previous results and techniques are not applicable.
Hence, one may consider the following scenario: suppose the algorithm (adversary) can make at most $q$ queries, then what is the trade-off between the precision $\veps$ and the success probability $p$? We can reinterpret Theorem~\ref{thm:main:restated:1} in the following: 
\begin{theorem}[Theorem~\ref{thm:main:restated:1}, restated]
For any integer $q \geq 0$, any $q$-query oracle-aided quantum algorithm that has access to an unknown controlled-unitary oracle and has the corresponding eigenstate, the probability $p$ that the algorithm successfully outputs $\Tilde{\theta}$ such that $|\Tilde{\theta}-\theta|\le\veps$ satisfies the trade-off relation $p/\veps=O(q)$.
\end{theorem}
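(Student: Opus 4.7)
The plan is to exploit the fact that after $q$ queries the algorithm's amplitude on any basis outcome is a polynomial of degree at most $q$ in $z = e^{2\pi i\theta}$, and then integrate the pointwise success probability against the uniform measure on $[0,1)$.

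First, I would make the polynomial structure explicit. Because $|u\rangle$ is an eigenstate of $U$ with eigenvalue $z$, each use of $c\minus U$ acts on the algorithm's registers as the diagonal unitary $|0\rangle\langle 0|+z|1\rangle\langle 1|$ on the control qubit, leaving $|u\rangle$ untouched. Interleaving $q$ such queries with arbitrary $\theta$-independent unitaries, every amplitude $\langle y|\psi_\theta\rangle$ of the final state is therefore a polynomial in $z$ of degree at most $q$; writing $\langle y|\psi_\theta\rangle = \sum_{j=0}^{q} c_{y,j}\,z^j$, the outcome probability $F_y(\theta) := |\langle y|\psi_\theta\rangle|^2$ is a nonnegative trigonometric polynomial in $\theta$ of degree at most $q$. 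By Cauchy--Schwarz and Parseval it satisfies
\[ F_y(\theta) \;=\; \left|\sum_{j=0}^{q} c_{y,j}\,z^j\right|^2 \;\leq\; (q+1)\sum_{j=0}^{q}|c_{y,j}|^2 \;=\; (q+1)\int_0^1 F_y(\vartheta)\,d\vartheta. \]
This single sup-to-integral inequality does all of the quantitative work.

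Second, I would average the pointwise success probability. Let $S_\theta := \{y : |y-\theta|\leq\veps\}$ and $p(\theta) := \sum_{y\in S_\theta}F_y(\theta)$, the algorithm's success probability at true phase $\theta$. By hypothesis $p(\theta) \geq p$ for every $\theta$, hence
\[ p \;\leq\; \int_0^1 p(\theta)\,d\theta \;=\; \sum_y\int_{|\theta-y|\leq\veps}F_y(\theta)\,d\theta \;\leq\; 2\veps \sum_y \|F_y\|_\infty \;\leq\; 2\veps(q+1)\sum_y\int_0^1 F_y(\vartheta)\,d\vartheta \;=\; 2\veps(q+1), \]
where the last equality uses $\sum_y F_y(\vartheta) = \||\psi_\vartheta\rangle\|^2 = 1$. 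Rearranging gives $q \geq p/(2\veps)-1 = \Omega(p/\veps)$, matching the claimed trade-off.

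The main (and only) obstacle is carefully justifying the polynomial-degree claim of the first step in the presence of arbitrary $\theta$-independent ancilla unitaries and multiple registers; this reduces to a short induction on the number of queries, and it is where the budget $q$ actually enters the argument. Everything else is elementary calculus, so the proof avoids the symmetrization, adversary potential, and extremal-polynomial machinery used by \cite{nayak1999quantum,bessen2005lower}.
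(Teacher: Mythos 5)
Your proof is correct, and it reaches the same quantitative bound $p\le 2\veps(q+1)$ as the paper, but it is packaged along a genuinely different (though ultimately isomorphic) route. The paper first discretizes to an $N$-phase distinguishing problem with $N=\floor{1/(2\veps)}$, purifies the random choice of phase into an auxiliary register $\mathsf{C}$, and shows that in the Fourier basis of $\mathsf{C}$ each query advances a ``counter'' by at most one, so the final joint state is supported on Fourier modes $0,\dots,q$; Cauchy--Schwarz then gives success probability at most $(q+1)/N$. Your argument keeps $\theta$ continuous and observes directly that the final state is a vector-valued polynomial of degree at most $q$ in $z=e^{2\pi i\theta}$; your coefficient vectors $c_{y,j}$ are exactly the paper's $\alpha_k\Pi^y_{\mathsf O}\ket{\psi_k}$, your Parseval identity is the paper's normalization condition $\sum_k|\alpha_k|^2=1$, and your sup-to-integral inequality is the same application of the vector Cauchy--Schwarz inequality (the paper's Lemma~\ref{lem:CS_ineq}). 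What your version buys is the elimination of the discretization and the purification register, and --- since you only need $\int_0^1 p(\theta)\,d\theta\ge p$ rather than $p(\theta)\ge p$ pointwise --- it also directly yields the average-case statement (the analogue of Corollary~\ref{cor:AvPE}) for the continuous uniform distribution on $\theta$. What the paper's packaging buys is that the counter picture extends transparently to inverse and power oracles, and it feeds the interpretation of the Cleve et al.\ algorithm in Section~\ref{sec:interpretation}.

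One point you should make explicit: the claim that each query contributes only degree one in $z$ is \emph{not} true for an arbitrary $U$ with $U\ket{u}=z\ket{u}$, because the algorithm may feed states outside the span of $\ket{u}$ into the target register, where $U$'s action could depend on $\theta$ arbitrarily. You must fix the hard family so that $U_\theta$ acts $\theta$-independently (say, as the identity) on the orthogonal complement of $\ket{u}$, exactly as the paper does with its $U_y=\omega_N^y\ket{u}\bra{u}+\sum_i\ket{e_i}\bra{e_i}$; since a lower bound only requires exhibiting one hard family, this is legitimate, and with that choice your induction on the number of queries goes through. This is a presentational gap rather than a mathematical one, but as written the sentence ``leaving $\ket{u}$ untouched'' elides the only place where the choice of hard instance matters.
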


Our bound is tight in terms of the number of queries. As a merit, our proof is elementary and simpler. Compared to the existing proofs which rely on Boolean function analysis and adversary methods, ours consists of merely basic linear algebra.

Furthermore, we consider the \emph{average hardness} where the oracle $U$ is sampled from some distribution. In Corollary~\ref{cor:AvPE}, we show that there exists a distribution $\mu$ over unitary operators such that when $U$ is sampled according to $\mu$, for any algorithm to obtain an $\veps$-approximation with any probability $p$ \emph{on average} still requires $\Omega(p/\veps)$ queries.

Note that for hardness results (lower bounds), the statement is stronger when the algorithm is only 
required to achieve a low success probability averaging over random input. To the best of our knowledge, none of the aforementioned previous results can be trivially extended to the setting in which $p\le1/2$ and the oracle is random\footnote{The the polynomial method~\cite{beals2001quantum} relies on the \emph{approximate degree} $\widetilde{\deg}_\epsilon(f)$ of Boolean functions to lower-bound the number of queries, where $\epsilon$ is the approximation factor. With a low success probability $p$, the factor $\epsilon$ will be close to $1$. However, the approximate degree is then $0$ since the constant function $f(x)=1/2$ is sufficient for the approximation. 
In the adversary method (and its variants)~\cite{ambainis2002quantum, hoyer2007negative}, generally, a decision problem is reduced to the original problem. Thus, in order to make the reduction work, the success probability $p$ needs to be greater than $1/2$. 
In the proof of~\cite{bessen2005lower}, to obtain eqn.~(22) and the inequality above it requires $p>1/2$.}.

Moreover, under the framework of our proof, in Section~\ref{sec:interpretation} we provide a new interpretation of the phase estimation problem. Conceptually, it makes the algorithm in~\cite{cleve1998quantum} more intuitive and understandable.

\paragraph{Technical Overview.}
Our approach is inspired by Zhandry's compressed oracle technique~\cite{zhandry2019record}. First, we introduce the  \emph{phase distinguishing problem} parameterized by $N\in\N$. In the problem, a set of unitary operators which share a common eigenstate $\ket{u}$ are labeled by $y\in\set{0,1,\dots, N-1}$. In the beginning, a unitary $U_y$ with eigenvalue $e^{2\pi yi/N}$ is sampled uniformly from the set and the algorithm is given the oracle access to controlled-$U_y$, denoted by $c\minus U_y$, and its eigenstate $\ket{u}$. The goal of the algorithm is to determine which unitary operator was chosen.

Suppose an algorithm can obtain an $\veps$-approximation of the phase with probability at least $p$. Then there exists a reduction that solves the phase distinguishing problem parameterized by $N = \floor{1/(2\veps)}$ with probability at least $p$ simply by rounding the approximated phase to the closest $y/N$. Consequently, we can derive the query lower bound for the phase estimation problem by showing the query lower bound for the phase distinguishing problem  (Corollary~\ref{cor:PE}). Under the same oracle distribution as in the phase distinguishing problem, we immediately conclude that it is also hard for any algorithm to approximate the phase $\emph{on average}$ over the oracles (Corollary~\ref{cor:AvPE}). 

The main technical part is to prove the hardness of the phase distinguishing problem (Theorem~\ref{thm:main}). In reminiscence of~\cite{zhandry2019record}, our proof precedes by \emph{purifying} the computation and analyzing the algorithm in the \emph{Fourier} basis. 
In particular, besides the registers of the algorithm, we augment an additional register $\mathsf{C}$ that records the choice of $y$. 
Notice that the distribution of $y$ is equivalent to measuring the uniform superposition state in the register $\mathsf{C}$. Since this measurement commutes with the operations performed by the algorithm, it can be deferred to the end. Therefore, the register $\mathsf{C}$ is initialized with uniform superposition which is the zero in the Fourier basis.
Then we introduce the coherent version of the unitary, denoted by $c\minus \U$, which implements $c\minus U_y$ in superposition if the content on register $\mathsf{C}$ is $y$.
As an important observation, when expressing $c\minus\U$ in the Fourier basis, each $c\minus\U$ increases the content of $\mathsf{C}$ at most by $1$.
Given the above, by representing the register $\mathsf{C}$ in the Fourier basis, the register $\mathsf{C}$ can be viewed as a \emph{counter} that records the number of queries. By leveraging this property and applying the Cauchy-Schwarz inequality, we can derive the maximum success probability of the algorithm. 

As a remark, our technique bears some similarities to~\cite{zhandry2019record} in the following sense. In the computational basis, the register $\mathsf{C}$ serves as a \emph{control} register that \emph{writes} the information to the algorithm; while in the Fourier basis, the register $\mathsf{C}$ serves as a \emph{counter} that \emph{records} the behavior of the algorithm.

In the purified view, one can interpret the algorithm's goal as generating a correlation between itself and the register $\mathsf{C}$. Therefore, one can view solving the phase distinguishing problem as manipulating a high-dimensional EPR state. Combining it with a property of (high-dimensional) EPR states, one can easily find the best strategy for solving the phase distinguishing problem. Surprisingly, it turns out that the strategy is the same as the phase estimation algorithm in~\cite{cleve1998quantum}.

\paragraph{Open Problems.}
A number of lower-bound techniques for classical problems have been developed~\cite{bennett1997strengths, beals2001quantum, ambainis2002quantum, aaronson2004quantum, hoyer2007negative, zhandry2019record}. However, techniques for pure \emph{quantum} tasks are relatively rare. We are looking forward to finding more applications of our technique.

\paragraph{Organization.}  In Section~\ref{sec:prelim}, we introduce the definitions and notations. In Section~\ref{sec:proof}, we present our main technical contribution. In Section~\ref{sec:interpretation}, we give an interpretation of our proof.
\section{Preliminaries}\label{sec:prelim}
\subsection{Notations}
For $N\in\N$, by $[N]$ we denote the set $\set{0,1,\dots,N-1}$.
We use bold letters (\eg $\bfy$) to denote random variables.
We use calligraphic letters (\eg $\cA$) to denote algorithms.
For a distribution $\mu$, by $u \gets \mu$ we mean sample $u$ from the distribution $\mu$.
By $y\ugets[N]$ we mean that $y$ is sampled uniformly at random from the set $[N]$. 
We use sans-serif letters (\eg $\mathsf{A},\mathsf{PD_N}$) to denote registers and problems.
By $\|\cdot\|$ we denote the Euclidean norm.
Throughout this work, we use the standard bra-ket notation for quantum objects. For the basics of quantum computing, we refer the readers to \cite{nielsen2010quantum}.

\subsection{Definitions}
\begin{definition}[Quantum Fourier Transform]
    Let $\set{\ket{y}}_{y\in[N]}$ be an orthonormal basis of $\C^N$, which we refer to as the \emph{computational basis}. The \emph{quantum Fourier transform} is defined by the following unitary mapping
    \[
    \mathsf{QFT} \colon \ket{y} \mapsto \ket{\hat{y}} \coloneqq \frac{1}{\sqrt{N}}\sum_{y'\in [N]}\omega_N^{yy'}\ket{y'},
    \]
    where $\omega_N \coloneqq e^{2\pi i/N}$ is the $N$-th root of unity and $\set{\ket{\hat{y}}}_{y\in[N]}$ is called the \emph{Fourier basis}.
    
    We adopt the following notation:
    \[ \begin{tikzcd}
    \ket{y} \arrow{r}{\mathsf{QFT}} \arrow[swap]{d}{\mathsf{QFT}^\dagger} & \ket{\hat{y}} \arrow{d}{\mathsf{QFT}^\dagger} \\%
    \ket{\minus\hat{y}} \arrow{r}{\mathsf{QFT}}& \ket{y}
    \end{tikzcd}
    \]
    where $\ket{\minus\hat{y}}$ denote the complex conjugate of $\ket{\hat{y}}$.
\qedsymbol
\end{definition}

\begin{definition}[Oracle-Aided Quantum Algorithm]
Let $q \geq 0$ be an integer. A $q$-query quantum algorithm $\cA^{(\cdot)}(\cdot)$ given oracle access to $O$ is specified by a sequence of unitary operators $A_0,A_1,\dots, A_q$. 
The final state of the algorithm is defined to be
\[
A_q O \dots A_1 O A_0 \ket{\phi_0}_{\mathsf{A}},
\]
where the subscript $\mathsf{A}$ denote the register of the algorithm and $\ket{\phi_0}$ is the initial state of $\cA$ which possibly depends on the input.

Note that in this work, we focus only on \emph{query} complexity. Thus, the size of $\cA$'s internal workspace can possibly be unbounded and each unitary operator $A_i$ is not necessarily efficiently implementable.
\qedsymbol
\end{definition}

For any unitary operator $U$, by \emph{controlled-$U$} or the \emph{controlled-version of $U$} we refer to the unitary operator $c\minus U \coloneqq \ket{0}\bra{0}\otimes I + \ket{1}\bra{1}\otimes U$. Below we define the (standard) phase estimation problem.
\begin{definition}[Phase Estimation Problem]\label{def:PE}
Given oracle access to a controlled-unitary $c\scalebox{0.9}{-}U$ and its eigenstate $\ket{u}$ such that $U\ket{u} = e^{2\pi i\theta}\ket{u}$, where $\theta \in [0,1)$. Find the phase $\theta$.

For every $\veps>0$ and $p\geq0$, we say an algorithm \emph{$\cA$ $(\veps,p)$-estimates} the phase if for every $(U,\ket{u})$, it holds that
\[
\Pr\left[|\Tilde{\theta}-\theta|\leq \veps \colon \Tilde{\theta}\gets\cA^{c\minus U}(\ket{u}) \right] \geq p,
\] 
where the probability is over the randomness of $\cA$.
\qedsymbol
\end{definition}

\begin{definition}[Average-Input Phase Estimation Problem]
Given oracle access to a unitary $c\minus U$ and its the eigenstate $\ket{u}$ such that $U\ket{u} = e^{2\pi i\theta}\ket{u}$, where $\theta \in [0,1)$ and $U$ is sampled according to some distribution $\mu$ over unitaries that share a common eigenstate $\ket{u}$. Find the phase $\theta$.

For every $\veps>0$ and $p\geq0$ and distribution $\mu$ over unitary operators, we say an algorithm \emph{$\left(\veps,p\right)$-average estimates} the phase with input distribution $\mu$ if
\[
\Pr\left[|\Tilde{\theta} - \theta| \leq \veps 
\colon \Tilde{\theta} \gets \cA^{c\minus U}(\ket{u}) , U\gets\mu \right] \geq p,
\]
where the probability is over the choice of $U$ and the randomness of the algorithm.
\qedsymbol
\end{definition}

Below we define the \emph{phase distinguishing problem} which can be viewed as the \emph{average-input} and \emph{discretized} version of the phase estimation problem.

\begin{definition}[Phase Distinguishing Problem]\label{def:PD}
For any $N \in\N$, let $\mathsf{PD_N}$ denote the $N$-phase distinguishing problem. Define a set of finite-dimensional unitary operators $\set{U_y}_{y\in[N]}$ in which all elements acts on $d$ qubits\footnote{That is, each $U_y$ is a unitary operator in $\C^{2^d\times2^d}$. When considering applying $U_y$ to a larger Hilbert space with more than $d$ qubits, the definition of $U_y$ naturally extends by acting as identity on the extra qubits.} and share a common eigenstate $\ket{u}$. 
For every $y\in[N]$, the unitary operator $U_y$ is defined as 
\[
U_y \coloneqq \omega_N^y\ket{u}\bra{u} + \sum_{i=2}^{2^d} \ket{e_i}\bra{e_i},
\]
where $\set{\ket{u},\ket{e_2},\dots,\ket{e_{2^d}}}$ is an orthonormal basis for $\C^{2^d}$.
\begin{itemize}
    \item Input: $N\in\N$, the eigenstate $\ket{u}$\footnote{In fact, no matter how many copies of $\ket{u}$ are given to the algorithm, the proof still holds.} and quantum oracle access to $c\minus U_y$, where $y\ugets[N]$ in the beginning.
    \item Output: An integer $y'\in[N]$.
\end{itemize}
The algorithm \emph{solves} $\mathsf{PD_N}$ if and only if $y=y'$.
We say an algorithm \emph{$p$-solves}
$\mathsf{PD_N}$ if and only if
\[
\Pr\left[y = y' \colon y'\gets \cA^{c\minus U_y}(N,\ket{u}) , y\ugets [N]\right] \geq p,
\]
where the probability is over the choice of $y$ and the randomness of the algorithm.
\qedsymbol
\end{definition}

To solve the $N$-phase distinguishing problem, it is sufficient to obtain an $\veps$-estimation of the phase with $\veps < 1/2N$ and then round the approximated phase to the closest $y/N$. As a result, we can immediately  obtain the following reductions between the problems.
\begin{lemma}\label{lem:red_1}
For every $\veps>0$ and $p\geq0$, if an algorithm $\cA$ $(\veps,p)$-estimates the phase, then the algorithm $\cA$ $(\veps,p)$-average estimates the phase with the input distribution $\mu$ defined to be uniform over the (finite) set $\set{U_y}_{y\in[N]}$ defined as in Definition~\ref{def:PD}, where $N \coloneqq \floor{1/(2\veps)}$.
\end{lemma}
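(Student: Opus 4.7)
The proof plan is essentially to observe that $(\veps,p)$-estimation is defined as a pointwise (worst-case) guarantee, so averaging over any distribution over valid inputs preserves the bound. Concretely, I would first unpack Definition~\ref{def:PE}: if $\cA$ $(\veps,p)$-estimates the phase, then for \emph{every} pair $(U,\ket{u})$ with $U\ket{u}=e^{2\pi i\theta}\ket{u}$, the random output $\Tilde\theta\gets\cA^{c\minus U}(\ket{u})$ satisfies $\Pr[|\Tilde\theta-\theta|\le\veps]\ge p$.

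Next I would verify that each unitary $U_y$ from Definition~\ref{def:PD}, together with the common eigenstate $\ket{u}$, is a legitimate input to the phase estimation problem. By construction $U_y\ket{u}=\omega_N^y\ket{u}=e^{2\pi i(y/N)}\ket{u}$, so the associated phase is $\theta_y\coloneqq y/N\in[0,1)$. Applying the $(\veps,p)$-estimation guarantee to each fixed pair $(U_y,\ket{u})$ yields
\[
\Pr_{\cA}\!\left[|\Tilde\theta-\theta_y|\le\veps\,\middle|\,U=U_y\right]\ \ge\ p
\qquad\text{for every }y\in[N].
\]

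Finally, taking the expectation over $y\ugets[N]$ (equivalently, over $U\gets\mu$) gives
\[
\Pr_{y,\,\cA}\!\left[|\Tilde\theta-\theta|\le\veps\right]\ =\ \frac{1}{N}\sum_{y\in[N]}\Pr_{\cA}\!\left[|\Tilde\theta-\theta_y|\le\veps\,\middle|\,U=U_y\right]\ \ge\ \frac{1}{N}\sum_{y\in[N]}p\ =\ p,
\]
which is exactly the $(\veps,p)$-average estimation condition for the distribution $\mu$ uniform over $\{U_y\}_{y\in[N]}$.

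There is no real obstacle here; the lemma is a bookkeeping step that upgrades a worst-case success probability into an average-case one for a specific finite family of instances. The only thing worth being careful about is that the reader sees why every $U_y$ is a valid instance of the phase estimation problem (it has $\ket{u}$ as eigenvector with a well-defined phase in $[0,1)$), and that the specific value $N=\floor{1/(2\veps)}$ plays no role in the argument itself — it is only fixed here so that the resulting family matches the discretization used by the reduction to $\mathsf{PD}_N$ in the next lemma.
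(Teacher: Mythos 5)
Your proof is correct and matches the paper's reasoning: the paper states this lemma as an immediate consequence of the worst-case guarantee in Definition~\ref{def:PE} holding pointwise for every valid instance $(U_y,\ket{u})$, and your argument simply spells out that averaging the pointwise bound over the uniform choice of $y$ preserves it. Your side remark that $N=\floor{1/(2\veps)}$ is irrelevant to this lemma and only matters for the rounding step in Lemma~\ref{lem:red_2} is also accurate.
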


\begin{lemma}\label{lem:red_2}
For every $\veps>0$ and $p\geq0$, if an algorithm $\cA$ $(\veps,p)$-average estimates the phase with the input distribution $\mu$ defined to be uniform over the (finite) set $\set{U_y}_{y\in[N]}$, where $N \coloneqq \floor{1/(2\veps)}$. Then the algorithm $\cA$ $p$-solves $\mathsf{PD_N}$.
\end{lemma}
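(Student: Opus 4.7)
The plan is to give a short, direct reduction. I would construct an algorithm $\cB$ for $\mathsf{PD_N}$ that, on input $(N, \ket{u})$ with oracle access to $c\minus U_y$, invokes $\tilde\theta \gets \cA^{c\minus U_y}(\ket{u})$ and then outputs $y' \coloneqq \round{N\tilde\theta} \bmod N$, where $\round{\cdot}$ denotes rounding to the nearest integer on the circle $\mathbb{Z}/N\mathbb{Z}$ under any fixed tiebreaking rule. The lemma is then a statement about $\cB$; identifying $\cA$ with $\cB$ is harmless since the post-processing is trivial and uses no additional queries.

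The main step will be the pointwise implication: whenever $|\tilde\theta - \theta_y| \leq \veps$, the output satisfies $y' = y$. Here I would use the structure of the sampled oracle supplied by Definition~\ref{def:PD}: $U_y\ket{u} = \omega_N^y \ket{u} = e^{2\pi i (y/N)}\ket{u}$, so the true phase is exactly $\theta_y = y/N$. On the success event this gives $|N\tilde\theta - y| \leq N\veps$, and by the choice $N = \floor{1/(2\veps)} \leq 1/(2\veps)$ we have $N\veps \leq 1/2$, so circular rounding recovers $y' = y$.

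With that implication in hand, the conclusion is a one-line probability accounting. The oracle distribution of the \emph{average-input phase estimation} problem (with $\mu$ uniform over $\set{U_y}_{y \in [N]}$) coincides with the distribution in $\mathsf{PD_N}$, so the hypothesis on $\cA$ directly yields
\[
\Pr[y' = y] \geq \Pr[|\tilde\theta - \theta_y| \leq \veps] \geq p,
\]
where the probabilities are over $y\ugets[N]$ and the internal randomness of $\cA$. The only technicality I anticipate is the degenerate case where $1/(2\veps)$ is an integer, so $N\veps = 1/2$ exactly and $\tilde\theta$ can sit on a midpoint between two lattice points; this is entirely resolved by the fixed tiebreaking convention for $\round{\cdot}$ and costs nothing in the success probability.
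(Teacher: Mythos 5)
Your reduction is exactly the one the paper intends (it states it in the sentence preceding Lemma~\ref{lem:red_1}: round $\tilde\theta$ to the nearest $y/N$, which is valid because the true phase of $U_y$ is $y/N$ and $N\veps\le 1/2$ by the choice $N=\floor{1/(2\veps)}$), and your probability accounting via the identical oracle distributions is correct. The only residual wrinkle is the boundary case $N\veps=1/2$ you already flag, where a point exactly at a midpoint is $\veps$-close to two lattice values and a fixed tiebreak could in principle favor the wrong one; the paper glosses over this in the same way, so your write-up matches its level of rigor and approach.
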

Therefore, in the rest of the paper we focus on proving the query lower bound for solving $\mathsf{PD_N}$.
Inspired by the work of Zhandry \cite{zhandry2019record}, we provide an equivalent description of the phase distinguishing problem. First, besides the algorithm's registers $\mathsf{A}$, we augment another register $\mathsf{C}$. Notice that the sampling of $y$ can be modeled as measuring the uniform superposition state $\ket{\hat{0}}_\mathsf{C} = \frac{1}{\sqrt{N}}\sum_{y\in[N]}\ket{y}_\mathsf{C}$ in the computational bases. Furthermore, the measurement commutes with the algorithm and thus can be deferred to the end. Consequently, we define the \emph{purified view} below and analyze the experiment in the \emph{Fourier} basis.

\begin{definition}[The Purified View of the Phase Distinguishing Problem]
For any integer $N\geq 1$, $q\geq0$ and every $q$-query oracle-aided quantum algorithm $\cA$, specified by $A_0,\dots,A_q$, the \emph{purified view} of $\cA$ \wrt $\mathsf{PD_N}$ is defined as 
\[
A_q c\minus\U \dots c\minus\U A_0\ket{\phi_0}_\mathsf{A} \ket{\hat{0}}_\mathsf{C},
\]
where $c\minus\U$ is the coherent-version of $c\minus U_y$ defined as $c\minus\U\coloneqq \sum_{y\in[N]}c\minus U_y\otimes\ket{y}\bra{y}_\mathsf{C}$.
\qedsymbol
\end{definition}



\begin{lemma}\label{lem:CS_ineq} For any finite set of complex numbers $\set{\alpha_i}_{i\in[N]}$ and finite set of finite-dimensional vectors over complex numbers $\set{\ket{\psi_i}}_{i\in[N]}$, it holds that
\[
\left\|\sum_{i\in[N]} \alpha_i\ket{\psi_i}\right\|^2
\leq \sum_{i\in[N]} |\alpha_i|^2 \cdot \sum_{i\in[N]} \left\|\ket{\psi_i}\right\|^2.
\]
\end{lemma}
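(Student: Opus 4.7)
The plan is to combine the triangle inequality for Hilbert-space norms with the classical Cauchy--Schwarz inequality applied to two real vectors. First, I would invoke the triangle inequality on the norm of a sum of vectors in the common finite-dimensional complex Hilbert space containing all $\ket{\psi_i}$, obtaining
\[
\left\|\sum_{i\in[N]} \alpha_i \ket{\psi_i}\right\| \;\leq\; \sum_{i\in[N]} |\alpha_i|\cdot\left\|\ket{\psi_i}\right\|.
\]
Next, I would view the right-hand side as the standard Euclidean inner product $\langle a, b\rangle$ of the two real vectors $a \coloneqq (|\alpha_i|)_{i\in[N]}$ and $b \coloneqq (\|\ket{\psi_i}\|)_{i\in[N]}$ in $\R^{N}$, both having non-negative entries, and invoke the standard Cauchy--Schwarz inequality to bound
\[
\left(\sum_{i\in[N]} |\alpha_i|\cdot\left\|\ket{\psi_i}\right\|\right)^2 \;\leq\; \sum_{i\in[N]} |\alpha_i|^2 \cdot \sum_{i\in[N]} \left\|\ket{\psi_i}\right\|^2.
\]
Squaring the displayed triangle-inequality bound and chaining it with this inequality yields the claimed estimate.

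The argument is essentially routine: no quantum structure is used beyond the fact that the Hilbert-space norm is subadditive and that all $\ket{\psi_i}$ live in a common ambient Hilbert space, so both the triangle inequality and Cauchy--Schwarz apply without modification. I do not anticipate any genuine obstacle; the only minor bookkeeping point is to ensure that the two real vectors $a$ and $b$ are indexed consistently over $[N]$ so that the Cauchy--Schwarz step is applied to the correct pair of sequences. Consequently, the proof will be a couple of lines combining these two standard inequalities.
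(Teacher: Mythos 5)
Your proposal is correct and is essentially identical to the paper's own proof: both apply the triangle inequality to reduce to $\sum_i |\alpha_i|\,\|\ket{\psi_i}\|$, then the standard Cauchy--Schwarz inequality on the real sequences $(|\alpha_i|)_i$ and $(\|\ket{\psi_i}\|)_i$, and square. No gaps.
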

\begin{proof}
By the triangle inequality and the Cauchy-Schwarz inequality, we have
\[
\left\|\sum_{i\in[N]} \alpha_i\ket{\psi_i}\right\|
\le \sum_{i\in[N]} |\alpha_i| \cdot \|\ket{\psi_i}\| \le \sqrt{\sum_{i\in[N]} |\alpha_i|^2 \cdot \sum_{i\in[N]} \left\|\ket{\psi_i}\right\|^2}.
\]
Squaring both sides completes the proof.
\end{proof}

\section{Proof of the Main Theorem}\label{sec:proof}
Nayak and Wu~\cite{nayak1999quantum} proved a (tight) query lower bound for the counting problem by using the polynomial method~\cite{beals2001quantum}. The counting problem can be reduced to the phase estimation and henceforth its lower bound is obtained. 
\begin{theorem}[\cite{nayak1999quantum}]\label{thm:PE_lower}
In the phase estimation problem, for any $\veps > 0$, any oracle-aided quantum algorithm requires $\Omega(1/\veps)$ queries to output $\Tilde{\theta}$ such that $|\Tilde{\theta}-\theta|\leq\veps$ with any constant probability $>1/2$.
\end{theorem}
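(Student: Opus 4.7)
The plan is to prove the stronger trade-off $q = \Omega(p/\veps)$, which implies the theorem immediately by setting $p$ to any constant greater than $1/2$. By Lemmas~\ref{lem:red_1} and~\ref{lem:red_2}, it suffices to show that any $q$-query algorithm that $p$-solves $\mathsf{PD_N}$ must satisfy $p \leq (q+1)/N$; the choice $N \coloneqq \floor{1/(2\veps)}$ then yields the claim. So from here on I focus entirely on $\mathsf{PD_N}$.

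First, I would pass to the purified view with the counter register $\mathsf{C}$ initialized to $\ket{\hat{0}}_\mathsf{C}$. The structural observation is that the coherent oracle $c\minus\U$ acts as a bounded shift in the Fourier basis of $\mathsf{C}$. Using $U_y = I + (\omega_N^y - 1)\ket{u}\bra{u}$, one rewrites
\[
c\minus\U \;=\; I \;+\; \bigl(\ket{1}\bra{1}\otimes\ket{u}\bra{u}\bigr)_{\mathsf{A}} \otimes D_\mathsf{C},
\qquad
D_\mathsf{C} \coloneqq \sum_{y\in[N]}(\omega_N^y - 1)\ket{y}\bra{y}_\mathsf{C}.
\]
A direct calculation shows $D_\mathsf{C}\ket{\hat{z}}_\mathsf{C} = \ket{\widehat{z+1}}_\mathsf{C} - \ket{\hat{z}}_\mathsf{C}$, so each oracle call increases the largest Fourier label supported in $\mathsf{C}$ by at most one. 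Since the $A_i$'s act as identity on $\mathsf{C}$, an easy induction on the number of queries gives that the final purified state has the form $\ket{\Psi} = \sum_{z=0}^{q}\ket{\psi_z}_\mathsf{A}\ket{\hat{z}}_\mathsf{C}$ with $\sum_{z}\|\ket{\psi_z}\|^2 = 1$.

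Finally, I would compute the success probability by expressing $\ket{\Psi}$ in the computational basis of $\mathsf{C}$ and projecting onto the ``correct answer'' subspace $\Pi \coloneqq \sum_{y\in[N]}\ket{y}\bra{y}_{\mathsf{out}}\otimes\ket{y}\bra{y}_\mathsf{C}$, where $\mathsf{out}$ is the output register of $\cA$. Writing $\ket{\psi_z^y} \coloneqq (\ket{y}\bra{y}_{\mathsf{out}}\otimes I)\ket{\psi_z}$ and expanding each $\ket{\hat{z}}_\mathsf{C} = \frac{1}{\sqrt{N}}\sum_y \omega_N^{zy}\ket{y}_\mathsf{C}$ yields
\[
p \;=\; \|\Pi\ket{\Psi}\|^2 \;=\; \frac{1}{N}\sum_{y\in[N]} \left\|\sum_{z=0}^{q}\omega_N^{zy}\ket{\psi_z^y}\right\|^2 \;\leq\; \frac{q+1}{N}\sum_{y,z}\|\ket{\psi_z^y}\|^2 \;=\; \frac{q+1}{N},
\]
where the inequality is Lemma~\ref{lem:CS_ineq} applied with $|\omega_N^{zy}| = 1$ and exactly $q+1$ summands in $z$. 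Rearranging gives $q \geq pN - 1 = \Omega(p/\veps)$.

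The conceptually nontrivial step is the factorization of $c\minus\U$ into identity plus a rank-one correction whose $\mathsf{C}$-component becomes a $+1$ Fourier-basis shift; this is exactly what promotes $\mathsf{C}$ from a ``label'' register to a query \emph{counter} and then feeds the Cauchy-Schwarz bound. The rest -- the inductive support bound and the final arithmetic -- I expect to be routine bookkeeping.
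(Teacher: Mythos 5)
Your proposal is correct and coincides with the paper's own route: the paper states this theorem only as a citation to Nayak--Wu, but it derives the same (indeed stronger) bound self-containedly via exactly the chain you describe --- Lemmas~\ref{lem:red_1} and~\ref{lem:red_2}, the purified view with the Fourier-basis counter (Lemma~\ref{lem:sparse}), and the Cauchy--Schwarz step (Lemma~\ref{lem:CS_ineq}) inside the proof of Theorem~\ref{thm:main}. Your only cosmetic deviation is obtaining the ``each query increments the counter by at most one'' property from the rank-one decomposition $c\minus\U = I + \bigl(\ket{1}\bra{1}\otimes\ket{u}\bra{u}\bigr)\otimes D_\mathsf{C}$ rather than by evaluating $c\minus\U$ directly on the eigenbasis vectors $\ket{1}\ket{u}\ket{\hat{k}}$ as the paper does; the two computations are equivalent.
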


Note that Theorem~\ref{thm:PE_lower} only shows the \emph{worst-case} hardness in the sense that the algorithm must approximate \emph{every} inputs $U$ and $\ket{u}$ within an error $\veps$. 
In certain scenarios, the inputs might be chosen from a distribution that is publicly known prior. Moreover, the threshold of the success probability is limited to a constant greater than half. In particular, in most cryptographic contexts, the security definition requires the success probability to be negligible. As we mentioned, a natural question is that given any error tolerance $\veps>0$ and any lower bound for the success probability $p \geq 0$, what is the minimum number of queries required to achieve such an approximation? Or equivalently, given any $\veps>0$ and an upper bound on the number of queries $q \geq 0$, what is the maximum probability of outputting an $\veps$-approximation? To the best of our knowledge, it is not clear whether techniques in the previous proofs can be generalized to such settings.

In the rest of the section, we aim to prove the following main theorem.
\begin{theorem}\label{thm:main}
For any $N\in\N$ and integer $q \geq 0$, every $q$-query oracle-aided quantum algorithm solves $\mathsf{PD_N}$ with probability at most $(q+1)/N$.
In other words, for any $N\in\N$ and $p \geq 0$, every oracle-aided quantum algorithm requires at least $\Omega(pN)$ queries in order to $p$-solve $\mathsf{PD_N}$.
\end{theorem}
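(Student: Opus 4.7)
My plan is to work in the purified view and carry out the whole analysis in the Fourier basis of the counter register $\mathsf{C}$. The starting observation is that the initial state of $\mathsf{C}$ is $\ket{\hat 0}_\mathsf{C}$, and I want to show that each application of $c\minus\U$ can only ``advance'' $\mathsf{C}$ by at most one step in the Fourier basis. Concretely, writing $c\minus\U = \sum_{y\in[N]} c\minus U_y \otimes \ket{y}\bra{y}_\mathsf{C}$ and recalling that $U_y$ acts as $\omega_N^y$ on $\ket{u}$ and as the identity on the orthogonal complement, I will compute
\[
c\minus\U\, \bigl(\ket{1}\ket{u}_\mathsf{A} \otimes \ket{\hat k}_\mathsf{C}\bigr) = \ket{1}\ket{u}_\mathsf{A} \otimes \ket{\widehat{k+1}}_\mathsf{C},
\]
while $c\minus\U$ acts as the identity on $\mathsf{C}$ whenever the control qubit is $\ket{0}$ or the target is orthogonal to $\ket{u}$. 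Consequently, in the decomposition of the algorithm's state into the Fourier basis of $\mathsf{C}$, each query shifts the support by at most one index. Since the initial state is supported on $\ket{\hat 0}_\mathsf{C}$, after $q$ queries the state can be written as
\[
\ket{\Psi} \;=\; \sum_{k=0}^{q} \ket{\psi_k}_\mathsf{A} \otimes \ket{\hat k}_\mathsf{C},
\qquad \sum_{k=0}^{q} \bigl\|\ket{\psi_k}\bigr\|^2 = 1.
\]

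Next I rewrite $\mathsf{C}$ in the computational basis via $\ket{\hat k} = \frac{1}{\sqrt N}\sum_y \omega_N^{ky}\ket{y}$, giving
\[
\ket{\Psi} \;=\; \sum_{y\in[N]} \bigl( \tfrac{1}{\sqrt N}\textstyle\sum_{k=0}^{q} \omega_N^{ky} \ket{\psi_k}_\mathsf{A} \bigr) \otimes \ket{y}_\mathsf{C}.
\]
The success probability equals $\sum_{y\in[N]} \bigl\| \Pi_y^{\mathrm{out}} \cdot \tfrac{1}{\sqrt N}\sum_{k=0}^{q} \omega_N^{ky}\ket{\psi_k}\bigr\|^2$, where $\Pi_y^{\mathrm{out}}$ projects onto the algorithm's output being $y$. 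Applying Lemma~\ref{lem:CS_ineq} with $\alpha_k = \omega_N^{ky}$ (so $|\alpha_k|^2 = 1$) to each summand, and then using $\sum_y \Pi_y^{\mathrm{out}} = I$ together with the orthogonality of the $\Pi_y^{\mathrm{out}}$'s, I will get
\[
\Pr[y = y'] \;\le\; \frac{q+1}{N}\sum_{k=0}^{q}\bigl\|\ket{\psi_k}\bigr\|^2 \;=\; \frac{q+1}{N},
\]
which is the claimed bound.

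The main obstacle is the careful computation of $c\minus\U$ in the Fourier basis of $\mathsf{C}$ and, in particular, verifying that only the ``$\ket{1}\ket{u}$'' sector triggers the $k\mapsto k+1$ shift while everything else is stabilized. Once that structural lemma is in hand, the remaining steps are a bookkeeping exercise: expand in the Fourier basis, change basis back, and apply Cauchy--Schwarz one time. The quantitative corollary ($\Omega(pN)$ queries to $p$-solve $\mathsf{PD_N}$) follows by rearranging $p \le (q+1)/N$, and then Lemmas~\ref{lem:red_1} and \ref{lem:red_2} translate this into the query lower bounds for the (average-input) phase estimation problems stated in Theorem~\ref{thm:main:restated:1}.
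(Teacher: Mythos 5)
Your proposal is correct and follows essentially the same route as the paper: the purified view with the counter register $\mathsf{C}$, the structural lemma that each query advances $\mathsf{C}$ by at most one step in the Fourier basis (proved via the same computation $c\minus\U\ket{1}\ket{u}\ket{\hat k}=\ket{1}\ket{u}\ket{\widehat{k\plus1}}$), followed by expanding back to the computational basis, one application of Lemma~\ref{lem:CS_ineq}, and completeness of the output projectors. The only cosmetic difference is that you absorb the amplitudes $\alpha_k$ into unnormalized vectors $\ket{\psi_k}$, which shifts the factor $q+1$ from the projector sum to the Cauchy--Schwarz coefficients but yields the identical bound $(q+1)/N$.
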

Then from Lemma~\ref{lem:red_1} and Lemma~\ref{lem:red_2}, we obtain the following corollary regarding the query lower bound for the standard and average-input phase estimation problem.

\begin{corollary}\label{cor:PE}
For any $\veps>0$ and $p \geq 0$, every oracle-aided quantum algorithm requires at least $\Omega(p/\veps)$ queries to $(\veps,p)$-estimate the phase.
\end{corollary}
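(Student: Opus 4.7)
The plan is simply to chain the two already-established reductions, Lemma~\ref{lem:red_1} and Lemma~\ref{lem:red_2}, with the query lower bound for the phase distinguishing problem given by Theorem~\ref{thm:main}. I would first dispose of the trivial regime $\veps \geq 1/2$, where $\floor{1/(2\veps)} \leq 1$ and the bound $\Omega(p/\veps)$ is vacuous; so fix $\veps < 1/2$ and set $N \coloneqq \floor{1/(2\veps)}$, which is a positive integer satisfying $N = \Theta(1/\veps)$, and in particular $N \geq 1/(2\veps) - 1 = \Omega(1/\veps)$.

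Next, suppose $\cA$ is a $q$-query oracle-aided quantum algorithm that $(\veps, p)$-estimates the phase. By Lemma~\ref{lem:red_1}, the same algorithm $\cA$ also $(\veps, p)$-average estimates the phase with respect to the uniform distribution over the finite set $\set{U_y}_{y \in [N]}$ of unitaries from Definition~\ref{def:PD}, since every such $U_y$ has eigenvalue $\omega_N^y$ on $\ket{u}$ and the spacing $1/N$ between neighboring phases exceeds $2\veps$, so an $\veps$-approximation uniquely identifies the index $y$. Lemma~\ref{lem:red_2} then yields that $\cA$, composed with the natural post-processing of rounding $\Tilde{\theta}$ to the nearest multiple of $1/N$ and outputting the corresponding $y' \in [N]$, $p$-solves $\mathsf{PD_N}$ while still using only $q$ oracle queries.

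Finally, I would invoke Theorem~\ref{thm:main}, which asserts that any $p$-solver for $\mathsf{PD_N}$ must make at least $\Omega(pN)$ queries. Combining this with $N = \Omega(1/\veps)$ gives $q \geq \Omega(pN) = \Omega(p/\veps)$, exactly the desired bound. The main obstacle is really not present in this corollary: all the technical content — in particular the purification of the oracle and the Fourier-basis counter argument — is concentrated in Theorem~\ref{thm:main}, and Corollary~\ref{cor:PE} is obtained essentially for free by composing the two reductions with the main theorem. The only thing to double-check is the $\Theta$ estimate on $N$ in the boundary cases of $\veps$, which is straightforward.
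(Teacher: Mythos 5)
Your proposal is correct and follows exactly the route the paper intends: Corollary~\ref{cor:PE} is obtained by chaining Lemma~\ref{lem:red_1} and Lemma~\ref{lem:red_2} with Theorem~\ref{thm:main} and noting $N = \floor{1/(2\veps)} = \Omega(1/\veps)$. Your extra care with the trivial regime $\veps \geq 1/2$ and the rounding post-processing only makes explicit what the paper leaves implicit.
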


\begin{corollary}\label{cor:AvPE}
For any $\veps>0$ and $p \geq 0$, there exists a distribution $\mu$ over unitary operators such that every oracle-aided quantum algorithm requires at least $\Omega(p/\veps)$ queries to $(\veps,p)$-average estimate the phase with input distribution $\mu$.
\end{corollary}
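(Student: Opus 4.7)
The plan to establish Corollary~\ref{cor:AvPE} is to instantiate the reduction chain already set up in the paper: take $\mu$ to be the uniform distribution over $\set{U_y}_{y\in[N]}$ from Definition~\ref{def:PD} with $N\coloneqq\floor{1/(2\veps)}$; then by Lemma~\ref{lem:red_2} any $(\veps,p)$-average estimator with input distribution $\mu$ is in particular a $p$-solver for $\mathsf{PD}_N$, so a query lower bound of $\Omega(pN)=\Omega(p/\veps)$ follows from Theorem~\ref{thm:main}. Thus all the real work is to prove Theorem~\ref{thm:main}: that every $q$-query algorithm $p$-solves $\mathsf{PD}_N$ only when $p\leq (q+1)/N$.

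For that bound I would work entirely in the purified view $\ket{\Psi_q}=A_q\,c\minus\U\,\cdots A_1\,c\minus\U\,A_0\,\ket{\phi_0}_\mathsf{A}\ket{\hat{0}}_\mathsf{C}$ and track how register $\mathsf{C}$ evolves in the \emph{Fourier} basis. The pivotal computation, which I would do up front, is to write $U_y=\omega_N^y\ket{u}\bra{u}+(I-\ket{u}\bra{u})$, set $\Pi_u\coloneqq\ket{1}\bra{1}\otimes\ket{u}\bra{u}$, and verify directly that
\[
c\minus\U\,\adjParens{\ket{\psi}_\mathsf{A}\otimes\ket{\hat{z}}_\mathsf{C}}=(I-\Pi_u)\ket{\psi}_\mathsf{A}\otimes\ket{\hat{z}}_\mathsf{C}+\Pi_u\ket{\psi}_\mathsf{A}\otimes\ket{\widehat{z+1}}_\mathsf{C}.
\]
In words: in the Fourier basis, each query either leaves the index on $\mathsf{C}$ unchanged or increments it by one, while the $A_i$'s act as the identity on $\mathsf{C}$. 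Since the computation starts from $\ket{\hat{0}}_\mathsf{C}$, a short induction then yields the structural invariant that after $q$ queries
\[
\ket{\Psi_q}=\sum_{z=0}^{q}\ket{\alpha_z}_\mathsf{A}\otimes\ket{\hat{z}}_\mathsf{C}, \qquad \sum_{z=0}^{q}\|\ket{\alpha_z}\|^2=1,
\]
so the Fourier support on $\mathsf{C}$ has size at most $q+1$.

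With this in hand the probability bound is short. Letting $\set{\Pi_{y'}}_{y'\in[N]}$ be the algorithm's final projective measurement on $\mathsf{A}$, the (deferred) computational-basis measurement of $\mathsf{C}$ commutes with everything, and expanding each $\ket{\hat{z}}_\mathsf{C}$ in the computational basis rewrites $\ket{\Psi_q}$ as $\frac{1}{\sqrt{N}}\sum_{y\in[N]}\adjParens{\sum_{z=0}^{q}\omega_N^{yz}\ket{\alpha_z}}\otimes\ket{y}_\mathsf{C}$, giving success probability $p=\frac{1}{N}\sum_{y}\adjBracks{\norm{\Pi_y\sum_{z=0}^{q}\omega_N^{yz}\ket{\alpha_z}}^2}$. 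I would apply Lemma~\ref{lem:CS_ineq} inside each summand (using $|\omega_N^{yz}|=1$ and the Fourier-support bound to cap the number of nonzero terms at $q+1$), then swap summation order and use $\sum_y\|\Pi_y\ket{\alpha_z}\|^2=\|\ket{\alpha_z}\|^2$ to telescope the bound down to $(q+1)/N$. The only step that requires genuine care is the Fourier-basis computation of $c\minus\U$ and the resulting inductive invariant on the Fourier support; once that is established, the Cauchy-Schwarz finish is essentially mechanical bookkeeping.
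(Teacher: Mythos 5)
Your proposal is correct and follows essentially the same route as the paper: instantiate $\mu$ as the uniform distribution over $\set{U_y}_{y\in[N]}$ with $N=\floor{1/(2\veps)}$, reduce via Lemma~\ref{lem:red_2} to $p$-solving $\mathsf{PD_N}$, and prove Theorem~\ref{thm:main} by showing that in the Fourier basis each query increments the counter register $\mathsf{C}$ by at most one, so the final state has Fourier support of size $q+1$ on $\mathsf{C}$, after which Lemma~\ref{lem:CS_ineq} yields the $(q+1)/N$ bound. Your operator identity $c\minus\U\,(\ket{\psi}_\mathsf{A}\otimes\ket{\hat{z}}_\mathsf{C})=(I-\Pi_u)\ket{\psi}_\mathsf{A}\otimes\ket{\hat{z}}_\mathsf{C}+\Pi_u\ket{\psi}_\mathsf{A}\otimes\ket{\widehat{z+1}}_\mathsf{C}$ is just a compact restatement of the paper's Lemma~\ref{lem:sparse} computation, and your normalization bookkeeping (unnormalized $\ket{\alpha_z}$ versus the paper's unit $\ket{\psi_k}$ with amplitudes $\alpha_k$) is an immaterial difference.
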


The rest of the section is dedicated to proving Theorem~\ref{thm:main}. The following lemma lies in the heart of the proof. Informally, in the \emph{Fourier} basis, the register $\mathsf{C}$ can be seen as a \emph{counter} that records the number of queries made by the algorithm.
In the beginning, the counter starts with the state $\ket{\hat{0}}_\mathsf{C}$ indicating zero. As an important observation, each query can increase the counter at most by $1$ (in superposition). Therefore, after $q$ queries, the possible values of the counter are between $0$ and $q$; then the final state will possess a \emph{sparse} representation in the purified view.

\begin{lemma}\label{lem:sparse}
For any $N\in\N$ and integer $q$ such that $N-1 \geq q \geq 0$, the final state of every $q$-query algorithm in the purified view \wrt $\mathsf{PD_N}$, denoted by $\ket{\psi}_\mathsf{AC}$, can be represented of the form
\[
\ket{\psi}_\mathsf{AC} = \sum_{k=0}^q \alpha_k\ket{\psi_k}_\mathsf{A}\ket{\hat{k}}_\mathsf{C},
\]
where all $\ket{\psi_k}$'s are unit (but not necessarily mutually orthogonal) vectors and $\alpha_i$'s are complex numbers satisfying the normalization condition $\sum_{k=0}^q|\alpha_k|^2=1$.
\end{lemma}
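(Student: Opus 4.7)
The plan is to induct on the number of queries, using a convenient decomposition of the coherent oracle $c\minus\U$ in the Fourier basis of register $\mathsf{C}$. As a first step I would rewrite each controlled query as
\[
c\minus U_y \;=\; \Pi_0 + \omega_N^y \Pi_1,
\]
where $\Pi_1 \coloneqq \ket{1}\bra{1}\otimes\ket{u}\bra{u}$ and $\Pi_0 \coloneqq I - \Pi_1$ are orthogonal projectors on register $\mathsf{A}$; this is immediate by checking the action on $\ket{0}\ket{v}$ for any $v$, on $\ket{1}\ket{u}$, and on $\ket{1}\ket{e_i}$ for $i\ge 2$. Substituting into the definition of $c\minus\U$ yields
\[
c\minus\U \;=\; \Pi_0 \otimes I_\mathsf{C} \;+\; \Pi_1 \otimes Z, \qquad Z \;\coloneqq\; \sum_{y\in[N]} \omega_N^y\, \ket{y}\bra{y}_\mathsf{C}.
\]
A short computation shows $Z\ket{\hat{k}}_\mathsf{C} = \ket{\widehat{k+1 \bmod N}}_\mathsf{C}$, so in the Fourier basis $Z$ is the cyclic shift by one. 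In other words, every application of $c\minus\U$ shifts the ``counter'' on $\mathsf{C}$ by at most one, and the non-oracle unitaries $A_i$ act as identity on $\mathsf{C}$ and thus do not shift the counter at all.

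Next I would induct on $j$ to show that for each $j \in \set{0,1,\dots,q}$, the state after applying $A_j (c\minus\U) A_{j-1}\cdots (c\minus\U) A_0$ to the initial state $\ket{\phi_0}_\mathsf{A}\ket{\hat{0}}_\mathsf{C}$ can be written as $\sum_{k=0}^{j} \ket{\varphi_k}_\mathsf{A}\ket{\hat{k}}_\mathsf{C}$ for some (possibly unnormalized, possibly zero) vectors $\ket{\varphi_k}$. The base case $j=0$ is immediate since $A_0$ acts trivially on $\mathsf{C}$. For the inductive step, assuming the stated form holds after $j$ operations, one more application of $c\minus\U$ produces
\[
\sum_{k=0}^{j}\bigl(\Pi_0\ket{\varphi_k}\bigr)\ket{\hat{k}} \;+\; \sum_{k=0}^{j}\bigl(\Pi_1\ket{\varphi_k}\bigr)\ket{\widehat{k+1\bmod N}},
\]
and because $j+1 \le q \le N-1$, no wraparound occurs, so the expression collapses to $\sum_{k=0}^{j+1}\ket{\varphi_k'}_\mathsf{A}\ket{\hat{k}}_\mathsf{C}$; the subsequent $A_{j+1}$ preserves this form since it acts as identity on $\mathsf{C}$.

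Finally, I would write each $\ket{\varphi_k}$ in the $j=q$ expression as $\alpha_k \ket{\psi_k}$, where $\alpha_k \coloneqq \|\ket{\varphi_k}\|$ and $\ket{\psi_k}$ is chosen to be the unit vector $\ket{\varphi_k}/\alpha_k$ (picking $\ket{\psi_k}$ arbitrarily when $\alpha_k=0$). Because the Fourier states $\ket{\hat{0}},\dots,\ket{\hat{q}}$ are mutually orthogonal---which uses exactly the distinctness guaranteed by $q \le N-1$---the unit norm of the overall state is $\sum_{k=0}^{q}|\alpha_k|^2\|\ket{\psi_k}\|^2 = \sum_{k=0}^{q}|\alpha_k|^2 = 1$, as required. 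The only real obstacle is ruling out wraparound in the shift $Z$ during the run, and the hypothesis $q\le N-1$ is exactly what does so; once this is noted, the rest of the argument is a mechanical induction.
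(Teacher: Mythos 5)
Your proof is correct and follows essentially the same route as the paper: an induction on the number of queries driven by the observation that, in the Fourier basis of $\mathsf{C}$, each application of $c\minus\U$ either leaves the counter fixed or increments it by one. Your operator identity $c\minus\U = \Pi_0\otimes I_\mathsf{C} + \Pi_1\otimes Z$ is just a cleaner packaging of the paper's case analysis on the eigenbasis $\set{\ket{u},\ket{e_2},\dots,\ket{e_{2^d}}}$, and you are slightly more explicit than the paper about where the hypothesis $q\le N-1$ is used (no wraparound, orthogonality of $\ket{\hat{0}},\dots,\ket{\hat{q}}$ for the normalization).
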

\begin{proof}
We finish the proof by induction on the number of queries $q$.
Initially, the register $\mathsf{C}$ is the zero $\ket{\hat{0}}_\mathsf{C}$ in the Fourier basis.
Note that the unitary operator performed by $\cA$ acts as identity on the register $\mathsf{C}$. Since all operators are unitary which preserves the normalization condition, the statement holds for the base case $q = 0$.

For the induction step, suppose the statement holds for some $q$. We first represent the state $\ket{\psi}_\mathsf{AC}$ in the following basis. For the qubits on which $U_y$ acts non-trivially, we choose the eigenbasis of $U_y$ that is $\set{\ket{u},\ket{e_2},\dots,\ket{e_{2^d}}}$. For the register $\sfC$, we choose the Fourier basis. Then we analyze the behavior of each basis vector. Finally, the conclusion holds by the linearity of unitary operators.

By induction hypothesis, the content of the register $\mathsf{C}$ is in $\set{\hat{0},\dots,\hat{q}}$.
An important observation is that for all $k\in [N]$, it holds that
\begin{align*}
c\minus\U\ket{1}\ket{u}\ket{\hat{k}}
& = \frac{1}{\sqrt{N}}\sum_{y\in[N]}\omega_N^{ky} c\minus\U \ket{1}\ket{u}\ket{y} 
= \ket{1}\otimes \frac{1}{\sqrt{N}}\sum_{y\in[N]}\omega_N^{ky} U_y \ket{u}\ket{y} \\
& = \ket{1}\ket{u} \sum_{y\in[N]}\frac{\omega_N^{(k+1)y}}{\sqrt{N}} \ket{y} 
= \ket{1}\ket{u}\ket{\widehat{k \scalebox{0.9}{+} 1}} 
\end{align*}
with addition modulo $N$ and the first register is the control register of $c\minus\U$. When the control register is $\ket{0}$, $c\minus\U$ simply becomes an identity.
Moreover, for any $\ket{v}$ orthogonal to $\ket{u}$, $c\minus\U$ also acts as identity on $\ket{1}\ket{v}\ket{\hat{k}}$ by the definition of $U_y$. Putting things together, we conclude that each query increases the counter $\mathsf{C}$ at most by $1$ (in superposition) in the Fourier basis. Therefore, this completes the proof.
\end{proof}

\begin{remark}
We note the proof of Lemma~\ref{lem:sparse} can be trivially extended to the setting in which the algorithm has the access to the inverse oracle $c\minus U^{-1}_y$ or power oracles $c\minus U^n_y$. The only relevant quantity is the number of possible values $k$'s of the counter that can be ``composed'' by using those oracles at most $q$ times. For example, using $c\minus U^{-1}_y$ once will subtract $1$ from the value of the counter; using $c\minus U^{n}_y$ once will add $n$ to the value of the counter.
\end{remark}

\begin{proof}[Proof of Theorem~\ref{thm:main}]
Without loss of generality, we assume that $\cA$ generates the output by measuring its output register $\mathsf{O}$ which is a part of the register $\mathsf{A}$ in the computational basis.

For every $y\in[N]$, define the projectors $\Pi_\mathsf{O}^y \coloneqq \ket{y}\bra{y}_\mathsf{O}$ and $\Pi_\mathsf{C}^y \coloneqq \ket{y}\bra{y}_\mathsf{C}$. 
Let $\ket{\psi}_\mathsf{AC}$ be the state after the whole computation and right before $\cA$ performs the final measurement to generate the output. By $\bfy$ we mean the random variable whose outcome is the choice of $y$.

The success probability of $\cA$ is given by

\begin{align*}
& \sum_{y\in[N]} \Pr\left[\cA^{c\minus U_\bfy} \text{ outputs } y \land \bfy = y \right] \\
& = \sum_{y\in[N]} \left\|\Pi_\mathsf{O}^y\otimes \Pi_\mathsf{C}^y\ket{\psi}_\mathsf{AC}\right\|^2 \\
& = \sum_{y\in[N]} \left\|\sum_{k=0}^q \alpha_k\Pi_\mathsf{O}^y\ket{\psi_k}_\mathsf{A}\otimes \Pi_\mathsf{C}^y\ket{\hat{k}}_\mathsf{C}\right\|^2 \tag{1}\label{1} \\
& = \sum_{y\in[N]} \left\|\sum_{k=0}^q \alpha_k\frac{\omega_N^{ky}}{\sqrt{N}}\Pi_\mathsf{O}^y\ket{\psi_k}_\mathsf{A}\right\|^2 \\
& \leq \sum_{y\in[N]} \left[ \left(\sum_{k=0}^q \left| \alpha_k\frac{\omega_N^{ky}}{\sqrt{N}} \right|^2\right) \cdot \left( \sum_{k=0}^q\|\Pi_\mathsf{O}^y\ket{\psi_k}_\mathsf{A}\|^2\right) \right] \tag{2}\label{2} \\
& = \frac{1}{N}\sum_{y\in[N]} \sum_{k=0}^q\|\Pi_\mathsf{O}^y\ket{\psi_k}_\mathsf{A}\|^2 \tag{3}\label{3} \\
& = \frac{1}{N}\sum_{k=0}^q\|\ket{\psi_k}_\mathsf{A}\|^2 \tag{4}\label{4} \\
& = \frac{q+1}{N} \tag{5}\label{5},
\end{align*}
where \eqref{1} is obtained by applying Lemma~\ref{lem:sparse} to $\ket{\psi}_\mathsf{AC}$; \eqref{2} follows from Lemma~\ref{lem:CS_ineq}; \eqref{3} and \eqref{5} follow from the normalization condition of Lemma~\ref{lem:sparse}; \eqref{4} follows from the fact that the projectors $\set{\Pi_\mathsf{O}^y}_{y\in[N]}$ form a complete basis.
\end{proof}

\section{An Interpretation of Phase Estimation Algorithm}\label{sec:interpretation}
With the following well-known property of EPR states in mind, one can naturally come up with the optimal algorithm from the proof in Section~\ref{sec:proof}.

\begin{fact}\label{fact:EPR}
For every $N\in \N$, the state $\ket{\Psi} \in \C^N\otimes\C^N$ defined as $\ket{\Psi} 
\coloneqq \frac{1}{\sqrt{N}}\sum_{y\in [N]}\ket{y}\ket{y}$ satisfies 
$\ket{\Psi} = \frac{1}{\sqrt{N}}\sum_{y\in [N]}\ket{\minus\hat{y}}\ket{\hat{y}}$, 
where $\ket{\minus\hat{y}}$ is the complex conjugate of $\ket{\hat{y}}$.
\end{fact}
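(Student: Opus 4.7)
The plan is to verify the identity by direct expansion of the right-hand side and collapse via the orthogonality of $N$-th roots of unity. Since the statement is essentially the well-known basis-independence of the maximally entangled state $\ket{\Psi}$ specialized to the pair consisting of the computational basis and the Fourier basis, no conceptual machinery beyond the definition of the quantum Fourier transform is required.

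First I would substitute the definitions of $\ket{\hat{y}}$ and its complex conjugate $\ket{\minus\hat{y}}$ into the right-hand side. Using $\ket{\hat{y}} = \frac{1}{\sqrt{N}}\sum_{y''\in[N]} \omega_N^{yy''}\ket{y''}$ together with $\ket{\minus\hat{y}} = \frac{1}{\sqrt{N}}\sum_{y'\in[N]}\omega_N^{-yy'}\ket{y'}$ (the componentwise conjugate of the former), the right-hand side expands to the triple sum
\[
\frac{1}{N^{3/2}}\sum_{y,y',y''\in[N]} \omega_N^{y(y''-y')}\, \ket{y'}\ket{y''}.
\]
Next I would swap the order of summation so that $y$ becomes the innermost index and invoke the standard orthogonality relation $\sum_{y\in[N]}\omega_N^{yk} = N\cdot\charac[k\equiv 0 \bmod N]$. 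This forces $y''=y'$, which kills every off-diagonal term and collapses the expression to $\frac{1}{\sqrt{N}}\sum_{y'\in[N]}\ket{y'}\ket{y'} = \ket{\Psi}$, as desired.

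The only difficulty is purely bookkeeping: tracking the signs of the exponents of $\omega_N$ between $\ket{\hat{y}}$ and $\ket{\minus\hat{y}}$, and not conflating the three distinct summation indices. One could alternatively invoke the transpose trick $(U\otimes\overline{U})\ket{\Psi}=\ket{\Psi}$ with $U=\mathsf{QFT}^\dagger$ for a one-line conceptual argument, but the explicit calculation above is more self-contained and matches the elementary spirit of the rest of the paper.
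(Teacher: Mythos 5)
Your proof is correct and is essentially the paper's own argument: the paper verifies the identity by projecting both sides onto $\bra{a}\bra{b}$ and evaluating $\frac{1}{N\sqrt{N}}\sum_{y}\omega_N^{-ay}\omega_N^{by}=\frac{1}{\sqrt{N}}\delta_{a,b}$, which is exactly the same orthogonality-of-roots-of-unity computation you perform by expanding the right-hand side into a triple sum and collapsing it. The difference is purely presentational (expansion of kets versus coefficient comparison), so nothing further is needed.
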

\begin{proof}
We finish the proof by comparing the coefficients. For any $a,b\in[N]$, we have
\[
\left(\bra{a}\bra{b}\right) \left(\frac{1}{\sqrt{N}}\sum_{y\in [N]}\ket{\minus\hat{y}}\ket{\hat{y}}\right)
= \frac{1}{N\sqrt{N}}\sum_{y\in [N]}\omega_N^{-ay}\omega_N^{by}
= \frac{1}{\sqrt{N}}\delta_{a,b}
= \left(\bra{a}\bra{b}\right) \left(\frac{1}{\sqrt{N}}\sum_{y\in [N]}\ket{y}\ket{y}\right),
\]
where $\delta_{a,b}$ equals $1$ if $a = b$ and $0$ otherwise.
\end{proof}

In the above analysis, we see that in order to achieve high success probability in the phase distinguishing problem, the algorithm's output register $\mathsf{O}$ must be highly correlated with the oracle register $\mathsf{C}$. 
For intuition, consider the following state which has a success probability $1$.
\[
\frac{1}{\sqrt{N}}\sum_{y\in[N]}\ket{y}_\mathsf{O}\ket{y}_\mathsf{C}
= \frac{1}{\sqrt{N}}\sum_{y\in[N]}\ket{\minus\hat{y}}_\mathsf{O}\ket{\hat{y}}_\mathsf{C}.
\]
Notice that the values are perfectly correlated and the marginal distribution of the oracle is uniformly random.

Therefore, we can interpret the goal of the algorithm as preparing the state on the right-hand side. Combining the observation we made in Theorem~\ref{thm:main} and Fact~\ref{fact:EPR}, we show the following algorithm for solving $\mathsf{PD_N}$.
\begin{enumerate}
    \item Initial state: $\ket{0}_\sfO\ket{u}_\sfW \ket{\hat{0}}_\sfC$, where $\sfA = (\sfO,\sfW)$.
    \item Create uniform superposition: $\frac{1}{\sqrt{N}}\sum_{y\in[N]}\ket{y}_\sfO \ket{u}_\sfW \ket{\hat{0}}_\sfC$.
    \item Perform the \emph{controlled-add} defined by
    \[
    \sum_{y\in[N]} \ket{y}\bra{y}_\sfO \otimes \left(c\minus\U\right)^y
    \]
    in superposition which requires $N$ queries.
    The resulting state will be 
    \[
    \frac{1}{\sqrt{N}}\sum_{y\in[N]}\ket{y}_\sfO \ket{u}_\sfW \ket{\hat{y}}_\sfC.
    \]
    \item Perform the inverse quantum Fourier transform on the register $\sfA$:
    \[
    \frac{1}{\sqrt{N}}\sum_{y\in[N]}\ket{\minus\hat{y}}_\sfO \ket{u}_\sfW \ket{\hat{y}}_\sfC.
    \]
\end{enumerate}
The above procedures are exactly the same as the phase estimation algorithm of Cleve \etal~\cite{cleve1998quantum} but rephrased in a different perspective.
\section*{Acknowledgment}
We would like to thank Kai-Min Chung for the helpful discussions. We also thank the anonymous TQC~2023 reviewers for suggesting a simpler proof to Lemma~2.9 and useful comments.

\bibliographystyle{alpha}
\bibliography{References.bib}

\end{document}